\documentclass[a4paper,11pt]{article}
\usepackage{sobrapo-template}
\usepackage[english]{babel}
\usepackage[latin1]{inputenc}
\usepackage{amsmath,amssymb,amsthm}

\usepackage{indentfirst}

\newtheorem{lemma}{Lemma}[section]
\newtheorem{theorem}{Theorem}[section]
\newtheorem{corollary}{Corollary}[section]

\title{ON THE PACKING CHROMATIC NUMBER ON HAMMING GRAPHS AND GENERAL GRAPHS}

\begin{document}

\maketitle

\author{
\name{Graciela Nasini}
\institute{Facultad de Ciencias Exactas, Ingenier\'ia y Agrimensura - Universidad Nacional de Rosario\\CONICET}
\iaddress{Pellegrini 250, (2000) Rosario, Argentina}
\email{nasini@fceia.unr.edu.ar}
}

\author{ 
\name{Daniel Severin}
\institute{Facultad de Ciencias Exactas, Ingenier\'ia y Agrimensura - Universidad Nacional de Rosario\\CONICET}
\iaddress{Pellegrini 250, (2000) Rosario, Argentina}
\email{daniel@fceia.unr.edu.ar}
}

\author{ 
\name{Pablo Torres}
\institute{Facultad de Ciencias Exactas, Ingenier\'ia y Agrimensura - Universidad Nacional de Rosario\\CONICET}
\iaddress{Pellegrini 250, (2000) Rosario, Argentina}
\email{ptorres@fceia.unr.edu.ar}
}

\vspace{8mm}

\begin{abstract}
The packing chromatic number $\chi_\rho(G)$ of a graph $G$ is the smallest integer $k$ needed to proper color the vertices of $G$
in such a way the distance between any two vertices having color $i$ be at least $i+1$.
We obtain $\chi_\rho(H_{q,m})$ for $m=3$, where $H_{q,m}$ is the Hamming graph of words of length $m$ and alphabet with $q$ symbols,
and tabulate bounds of them for $m \geq 4$ up to 10000 vertices. We also give a polynomial reduction from the problem of finding
$\chi_\rho(G)$ to the Maximum Stable Set problem.
\end{abstract}

\bigskip
\begin{keywords}
Packing Chromatic Number. Hamming Graph. Stable Set Problem.

\bigskip
\noindent{Main Area: Particionamento em Grafos}
\end{keywords}

\newpage

\section{Introduction and previous results}

The packing chromatic number, originally called {\it broadcast chromatic number}, is a concept established by Goddard et al. (2008)
and comes from the area of frequency planning in wireless networks.
Given an undirected graph $G = (V,E)$, a {\it packing $k$-coloring} of $G$ is a partition $V_1,\ldots,V_k$ of $V$ such that for all
$i \in \{1, \ldots, k\}$, every distinct vertices $u,v \in V_i$ are at distance at least $i+1$. As usual, the smallest integer $k$ for which
exists a packing $k$-coloring of $G$ is called the {\it packing chromatic number} (PCN) of $G$ and it is denoted by
$\chi_\rho(G)$ (Bre\v{s}ar et al., 2007). Fiala and Golovach (2010) showed that determining this number is an $\mathcal{NP}$-hard problem even
for trees.

There are some interesting results relating this problem with the Maximum Stable Set (MSS) problem.
Let $G$ be a graph with $n$ vertices and diameter $d$. Goddard et al. (2008) proved that if $d=2$ then
$\chi_{\rho}(G)= n+1 - \alpha(G)$, where $\alpha(G)$ is the stability number of $G$ (i.e.$\!$ the cardinality of the maximum stable set).
Later, this result was generalized by Argiroffo et al. (2012) for graphs $G$ for which $\chi_{\rho}(G)\geq d$.

Let us introduce some notation. For any positive integer $k$ we denote $[k] = \{1, \ldots, k\}$.
From now on, we assume that a graph $G = (V, E)$ has $V = [n]$, it is connected and has diameter at least two.
Let $d_G(u,v)$ be the distance between $u$ and $v$ in $G$.
For $k \geq 2$, let $G^k$ be the graph such that $V(G^k) = V(G)$ and $E(G^k) = \{ (u,v) : d_G(u,v) \leq k \}$.
For $F \subset [n]$, let $G^F$ be the graph such that $V(G^F) = \{(v,k) : v \in V(G), k \in F \}$ and

$$E(G^F) = \bigcup_{k \in F} \{ ((u,k),(v,k)) : (u,v) \in E(G^k) \} \cup
                    \bigcup_{j,k \in F : j < k} \{ ((v,j),(v,k)) : v \in V(G) \}.$$
                    
Observe that, if $F=\{k\}$, then $G^F$ is isomorphic to $G^k$ and, if $G$ has diameter two, $G^{[d-1]}$ is isomorphic to $G$.

The next lemma includes the mentioned generalization on graphs with diameter two: 
\begin{lemma} (Argiroffo et al., 2012) \label{PROPPABLO}
Let $G=(V,E)$ be a graph with diameter $d$.
Then, $\chi_\rho(G) = \min \{ k : \alpha(G^{[k]}) = n \}$. Therefore, $\chi_\rho(G) \geq d$ if and only if $\alpha(G^{[d-1]}) < n$ and, in this case, $\chi_\rho(G) = (d-1) + n - \alpha(G^{[d-1]})$.
\end{lemma}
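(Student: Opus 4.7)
The plan is to exhibit a correspondence between packing $k$-colorings of $G$ and stable sets of size $n$ in $G^{[k]}$, then use the diameter hypothesis to convert the optimization over $k$ into a single stable-set computation. First I would unpack the construction of $G^{[k]}$: the inter-layer edges force any stable set $S$ to contain at most one copy $(v,i)$ of each $v \in V$, so a stable set of size $n$ defines a total map $c : V \to [k]$ with $c(v) = i$ whenever $(v,i) \in S$; the intra-layer edges in layer $i$ are precisely the pairs $(u,v)$ with $d_G(u,v) \leq i$, so stability within the layer is exactly the packing constraint $d_G(u,v) \geq i+1$ whenever $c(u) = c(v) = i$. Running the correspondence in both directions shows that $\chi_\rho(G) \leq k$ if and only if $\alpha(G^{[k]}) = n$, which is the first displayed equality. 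Since $k \mapsto \alpha(G^{[k]})$ is non-decreasing (any stable set of $G^{[k]}$ sits inside $G^{[k+1]}$), the equivalence $\chi_\rho(G) \geq d \iff \alpha(G^{[d-1]}) < n$ follows immediately.

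For the final formula, assume $\chi_\rho(G) \geq d$ and set $\alpha := \alpha(G^{[d-1]})$. The pivotal observation is that because $G$ has diameter $d$, any color class for a color $i \geq d$ is a singleton: two distinct vertices with the same color $i \geq d$ would have to be at distance $\geq i+1 > d$, which is impossible. For the upper bound I would start from a maximum stable set of $G^{[d-1]}$, read it as a partial packing coloring that colors $\alpha$ vertices using colors from $[d-1]$, and extend it by assigning the $n - \alpha$ uncolored vertices the pairwise distinct colors $d, d+1, \ldots, d-1+(n-\alpha)$, one per vertex. By the singleton observation this is automatically a valid packing coloring, so $\chi_\rho(G) \leq d - 1 + n - \alpha$.

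For the matching lower bound I would take any packing $\chi_\rho(G)$-coloring and let $m$ be the number of vertices receiving a color $\geq d$. The remaining $n - m$ vertices, paired with their colors, form a stable set of $G^{[d-1]}$, so $n - m \leq \alpha$. Conversely, each color in $\{d, d+1, \ldots, \chi_\rho(G)\}$ is used at most once, and the largest color $\chi_\rho(G)$ is genuinely used (otherwise the packing chromatic number would be smaller), which gives $m \leq \chi_\rho(G) - (d-1)$. Chaining the two inequalities yields $\chi_\rho(G) \geq d - 1 + n - \alpha$, closing the argument. The main subtlety to watch is that tightness of this counting rests on the largest color being attained by the optimal coloring; once that is in hand, the proof is essentially a careful translation between stable sets in $G^{[k]}$ and packing colorings of $G$.
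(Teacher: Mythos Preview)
Your argument is correct. Note, however, that the paper does not supply its own proof of this lemma: it is quoted from Argiroffo et al.\ (2012) and used as a black box, so there is no in-paper proof to compare your approach against. The bijection you set up between stable sets of size $n$ in $G^{[k]}$ and packing $k$-colorings of $G$ is precisely the intended reading of the construction of $G^{[k]}$, and the observation that color classes for colors $i \geq d$ are singletons is the standard route to the closed formula in the diameter-$d$ case.

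One minor remark on the lower bound: you mention that the top color $\chi_\rho(G)$ is genuinely used, but this fact is not actually needed for the inequality $m \leq \chi_\rho(G)-(d-1)$. That bound follows directly because the set $\{d,d+1,\ldots,\chi_\rho(G)\}$ has exactly $\chi_\rho(G)-(d-1)$ elements and each such color is used at most once; whether every one of them is realized is irrelevant. Dropping that clause tightens the exposition without affecting the logic.
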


This result is useful for obtaining the exact value, or nontrivial bounds, of PCN for several families of graphs. In particular, Torres
(2013) applied it on the family of hypercubes. Hypercubes are particular cases of Hamming graphs, a main subject of study in coding theory. 

For any $q,m \in \mathbb{Z}^+$ with $q \geq 2$, the {\it Hamming graph} $H_{q,m}$ is the graph in which vertices
are represented as vectors $(u_1, \ldots, u_m) \in \{0,\ldots,q-1\}^m$, and two vertices are adjacent if and only if they differ in exactly
one coordinate (here, we use the definition given by Klotz and Sharifiyazdi (2008), but other authors, such as El Rouayheb et al. (2007),
uses a different definition).
Equivalently, $H_{q,m} = K_q \Box K_q \Box \ldots \Box K_q$ ($m$ times), where $K_q$ is the complete graph on $q$ vertices
and $G \Box H$ is the cartesian product of $G$ and $H$.
The Hamming graph $H_{q,m}$ has $q^m$ vertices, $m(q-1)q^m/2$ edges and diameter $m$. Finding the stability number of $H_{q,m}^k$, denoted by
$A_q(m,k+1)$ in coding theory literature (see, for instance, Huffman and Pless, 2003), is one of the most challenging problems in this area
and the exact value is known for just few cases, while for harder cases, only bounds are known.
As we shall see in the next section, the PCN of $H_{q,m}$ and the values $\alpha(H_{q,m}^k)$ are related each other.

Although exact values of PCN have been obtained for $m$-dimensional hypercubes up to $m=8$ (see Goddard et al., 2008, and Torres, 2013),
it is an open problem to know the PCN of remaining hypercubes, as well as the PCN of Hamming graphs for $q \geq 3$.\\
 
This work addresses two topics. In Section 2, we give a direct formula for computing $\chi_\rho(H_{q,m})$ for $m \in \{2,3\}$ and for all
$q$. We also get bounds for the PCN of Hamming graphs up to 10000 vertices by using an Integer Linear Programming approach.
In particular, we improve upper bounds for $m$-dimensional hypercubes with $m = 11, 12, 13$.

On the other hand, in Section 3, we present a polynomial reduction from the problem of computing the PCN of any graph to the MSS problem.
This reduction is given by the construction of a new graph $G_*^{[d-1]}$ from $G^{[d-1]}$, of size $(n+1)(d-1)$, and allows to address
the PCN for cases where previous results in the literature did not contemplate them.

\bigskip
\section{PCN of Hamming graphs} \label{hamm}

As we have mentioned in the introduction, Hamming graphs $H_{q,m}$ can be thought as the result of applying $m$ times the cartesian product
of the complete graph of $q$ vertices.
Bre\v{s}ar et al. (2007) proved that, for any graph $G$ with diameter $d$, $\chi_\rho(G \Box K_q) \geq q \chi_\rho(G) - (q-1) d$.
Then, we can obtain a lower bound of the PCN of $H_{q,m}$ as follows. 

\begin{lemma} \label{BRESAR}
$\chi_\rho(H_{q,m}) \geq m-1 + q^m - \sum_{k=1}^{m-1} q^{k}$.
\end{lemma}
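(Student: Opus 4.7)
The plan is to prove the bound by induction on $m$, using the Brešar et al. inequality quoted just before the lemma. Note that $H_{q,m} = H_{q,m-1} \Box K_q$ and $H_{q,m-1}$ has diameter $m-1$, so the hypothesis of that inequality applies with $G = H_{q,m-1}$ and $d = m-1$, giving
\[
\chi_\rho(H_{q,m}) \;\geq\; q\,\chi_\rho(H_{q,m-1}) - (q-1)(m-1).
\]

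For the base case I would take $m = 1$: $H_{q,1} = K_q$, and a packing coloring of $K_q$ obviously needs $q$ colors, so $\chi_\rho(H_{q,1}) = q$, which matches the right-hand side of the claimed bound since $\sum_{k=1}^{0} q^{k} = 0$ and $m-1 = 0$.

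For the inductive step, assume $\chi_\rho(H_{q,m-1}) \geq (m-2) + q^{m-1} - \sum_{k=1}^{m-2} q^{k}$. Plugging into the recurrence above, I would obtain
\[
\chi_\rho(H_{q,m}) \;\geq\; q(m-2) + q^{m} - \sum_{k=2}^{m-1} q^{k} - (q-1)(m-1),
\]
and then simplify the arithmetic combination $q(m-2) - (q-1)(m-1) = m - 1 - q$, so that the $-q$ merges with $-\sum_{k=2}^{m-1} q^{k}$ to form $-\sum_{k=1}^{m-1} q^{k}$, yielding exactly $(m-1) + q^{m} - \sum_{k=1}^{m-1} q^{k}$, as desired.

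There is essentially no obstacle here: the only thing to check carefully is the algebraic cancellation in the inductive step, and the fact that invoking the Brešar et al. bound is legal because $H_{q,m-1}$ genuinely has diameter $m-1$. Everything else is a routine telescoping induction.
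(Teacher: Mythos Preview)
Your proposal is correct and follows essentially the same approach as the paper's proof: induction on $m$ with base case $m=1$ (where $\chi_\rho(K_q)=q$) and inductive step via the Bre\v{s}ar et al.\ inequality $\chi_\rho(G \Box K_q) \geq q\,\chi_\rho(G) - (q-1)d$ applied to $G = H_{q,m-1}$. The only cosmetic difference is that the paper indexes the step as $m \to m+1$ rather than $m-1 \to m$, but the algebra is identical.
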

\begin{proof}
The proof is by induction on $m$. If $m=1$, $\chi_\rho(H_{q,1}) = \chi_\rho(K_q) = q$.
For $m \geq 2$, the result given by Bre\v{s}ar et al. (2007) implies $\chi_\rho(H_{q,m+1}) \geq q \chi_\rho(H_{q,m}) - (q-1)m$.
Then, by the induction hypothesis,
$\chi_\rho(H_{q,m+1}) \geq  q (m-1 + q^m - \sum_{k=1}^{m-1} q^{k}) - (q-1)m = m + q^{m+1} - \sum_{k=1}^{m} q^{k}.$
\end{proof}

\medskip

Since $q^m - \sum_{k=1}^{m-1} q^{k}$ is a positive integer, $\chi_\rho(H_{q,m})$ is at less $m$, the diameter of $H_{q,m}$.
Therefore, we can apply Lemma \ref{PROPPABLO}, obtaining:

\begin{corollary} \label{COROLLARY2}
For $m \geq 2$, $\chi_\rho(H_{q,m}) = m-1 + q^m - \alpha(H_{q,m}^{[m-1]})$.
\end{corollary}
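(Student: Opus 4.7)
The plan is to derive the corollary as an almost immediate combination of Lemmas \ref{PROPPABLO} and \ref{BRESAR}. Notice that $H_{q,m}$ has diameter $d = m$ and $n = q^m$ vertices. If I can show $\chi_\rho(H_{q,m}) \geq m$, then the hypothesis $\chi_\rho(G) \geq d$ of Lemma \ref{PROPPABLO} is met, and its conclusion $\chi_\rho(G) = (d-1) + n - \alpha(G^{[d-1]})$ specializes to exactly $\chi_\rho(H_{q,m}) = m - 1 + q^m - \alpha(H_{q,m}^{[m-1]})$. So the entire task reduces to certifying that $\chi_\rho(H_{q,m})$ is at least the diameter.

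For this step I would invoke Lemma \ref{BRESAR}, which already delivers $\chi_\rho(H_{q,m}) \geq m - 1 + q^m - \sum_{k=1}^{m-1} q^k$. It therefore suffices to verify that the integer $q^m - \sum_{k=1}^{m-1} q^k$ is at least $1$. Since the geometric sum equals $q(q^{m-1}-1)/(q-1)$, a direct computation shows $q^m - \sum_{k=1}^{m-1} q^k = (q^{m+1} - 2q^m + q)/(q-1)$, which is strictly positive for every $q \geq 2$ and $m \geq 2$; equivalently, $\sum_{k=1}^{m-1} q^{-(m-k)} = \sum_{j=1}^{m-1} q^{-j} < 1$ for $q \geq 2$, so the leading term $q^m$ dominates. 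As the quantity is an integer, it is $\geq 1$.

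Putting the two pieces together, $\chi_\rho(H_{q,m}) \geq m$, and Lemma \ref{PROPPABLO} applied with $d = m$ and $n = q^m$ yields the stated equality.

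I do not expect any real obstacle here: the corollary is essentially a bookkeeping step, and the only item requiring care is the positivity check $q^m > \sum_{k=1}^{m-1} q^k$, which is a one-line geometric-series argument. All the nontrivial content has already been placed in Lemmas \ref{PROPPABLO} and \ref{BRESAR}.
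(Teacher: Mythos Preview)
Your proposal is correct and follows essentially the same route as the paper: the paper also observes that $q^m - \sum_{k=1}^{m-1} q^k$ is a positive integer, concludes from Lemma~\ref{BRESAR} that $\chi_\rho(H_{q,m}) \geq m$, and then invokes Lemma~\ref{PROPPABLO} with $d=m$, $n=q^m$. You simply supply a more explicit justification of the positivity step than the paper does.
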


Now, we prove that the bound given in Lemma \ref{BRESAR} holds as equality for $H_{q,m}$ with
$m \in \{2,3\}$ and any $q\geq m$. 

\begin{theorem} \label{TEOREMITA}
(i) $\chi_\rho(H_{q,2}) = 1 + q^2 - q$.\\
(ii) If $q \geq 3$, $\chi_\rho(H_{q,3}) = 2 + q^3 - q^2 - q$.
\end{theorem}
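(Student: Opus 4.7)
The plan is, for each part, to pair the lower bound from Lemma~\ref{BRESAR} with an upper bound supplied by Corollary~\ref{COROLLARY2}. In both cases this reduces the problem to exhibiting a sufficiently large stable set in $H_{q,m}^{[m-1]}$.

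Part (i) should be essentially automatic: since $H_{q,2}^{[1]}$ coincides with $H_{q,2} = K_q \Box K_q$ (the $q \times q$ rook's graph), a maximum stable set has size $q$, realised by any set of the form $\{(i, \pi(i)) : i \in \{0,\ldots,q-1\}\}$ for a permutation $\pi$. Plugging $\alpha(H_{q,2}^{[1]}) = q$ into Corollary~\ref{COROLLARY2} yields the equality claimed in (i).

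For part (ii), Corollary~\ref{COROLLARY2} reduces the matching upper bound to exhibiting a stable set of $H_{q,3}^{[2]}$ of size at least $q^2 + q$. Unpacking the definition of $G^{[d-1]}$, such a stable set corresponds to a pair $(V_1, V_2)$ of disjoint subsets of $V(H_{q,3})$ with $V_1$ independent in $H_{q,3}$ and $V_2$ a $2$-packing (any two of its elements at Hamming distance $3$). I would push both $|V_1|$ and $|V_2|$ to their respective maxima: $|V_1| = q^2$ via a Latin-square set $V_1 = \{(i, j, f(i,j)) : i,j \in \{0,\ldots,q-1\}\}$, and $|V_2| = q$ via the diagonal code $V_2 = \{(i, i, i) : i \in \{0,\ldots,q-1\}\}$, which saturates the Singleton-type bound $A_q(3,3) \leq q$. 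With these choices, independence of $V_1$ reduces to the Latin property of $f$, the $2$-packing property of $V_2$ is immediate, and disjointness $V_1 \cap V_2 = \emptyset$ collapses to the single condition $f(i,i) \neq i$ for every $i$.

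The step I expect to be the main obstacle, and the precise reason for the hypothesis $q \geq 3$, is the production of a Latin square of order $q$ with no fixed point on its diagonal. This is impossible for $q = 2$, since both Latin squares of order $2$ have such a fixed point, whereas for $q \geq 3$ an explicit construction works: start from the cyclic square $f_0(i,j) = (i + j) \bmod q$, whose diagonal $2i \bmod q$ has exactly one fixed point (at $i = 0$), and then transpose the values $0$ and $1$. A brief parity-based check confirms that the modified diagonal is fixed-point free for every $q \geq 3$, after which the three verifications above, combined with the Lemma~\ref{BRESAR} lower bound, close the argument.
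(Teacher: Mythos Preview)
Your proof is correct and follows the same strategy as the paper: pair the lower bound from Lemma~\ref{BRESAR} with an explicit stable set of size $q^2+q$ in $H_{q,3}^{[2]}$, realised as a disjoint pair $(V_1,V_2)$ with $V_1$ a size-$q^2$ stable set of $H_{q,3}$ and $V_2$ a size-$q$ stable set of $H_{q,3}^2$. The only difference is cosmetic: the paper fixes $V_1=\{v:v_1+v_2+v_3\equiv 0\pmod q\}$ and then selects $V_2=\{(i,i,a_i)\}$ for a permutation $(a_i)$ with $2i+a_i\not\equiv 0\pmod q$, whereas you fix $V_2=\{(i,i,i)\}$ and instead perturb the Latin square defining $V_1$ to avoid diagonal fixed points---dual ways of enforcing the same disjointness constraint.
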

\begin{proof}
In first place, let $\mathcal{S}_{q,m} = \{ (v_1, \ldots, v_m) \in V(H_{q,m}) : v_1 + \ldots + v_m = 0~(mod\ q)\}$ and
suppose that $u, v \in \mathcal{S}_{q,m}$. If $(u, v) \in E(H_{q,m})$, then there exists $j \in [m]$ such that $u_j \neq v_j$ and
$u_i = v_i$ for all $i \in [m] \backslash \{j\}$. However, $u_j = v_j ~(mod\ q)$ which is absurd. Therefore, $\mathcal{S}_{q,m}$
is a stable set of $H_{q,m}$ of size $q^{m-1}$.

Part (i). In virtue of Lemma \ref{BRESAR} and Corollary \ref{COROLLARY2}, we only need to prove $\alpha(H_{q,2}) \geq q$, which is true
since $|\mathcal{S}_{q,2}| = q$.

Part (ii). Let $H = H_{q,3}^{[2]}$. Again, we only need to prove
$\alpha(H) \geq q^2 + q$. We build a stable set $S$ of $H$ such that $|S| = q^2 + q$.

Define $a_0, a_1, \ldots, a_{q-1}$ as follows. For $i = 0,\ldots, \lfloor q/2 \rfloor-1$, $a_i = q - 2(i+1)$.
For $i = \lfloor q/2 \rfloor,\ldots,q-1$, $a_i = 2q - 2(i+1)$ if $q$ is odd, or $a_i = 2q - 2(i+1) + 1$ if $q$ is even.
It is not difficult to see that $(a_0, a_1, \ldots, a_{q-1}$) is a permutation of $(0, \ldots, q-1)$ and each $a_i$ satisfies $2i + a_i \neq 0~(mod\ q)$.

Now, consider $S = \{ (v,1) : v \in S^1 \} \cup \{ (v,2) : v \in S^2 \}$ where
$S^1 = \mathcal{S}_{q,3}$ and $S^2 = \{ (i,i,a_i) : i = 0,\ldots,q-1 \}$.
$S^1$ is a stable set of $H_{q,3}$ with size $q^2$, whereas $S^2$ is a stable set of $H_{q,3}^2$ with size $q$,
since vectors $(i,i,a_i)$ and $(j,j,a_j)$ differ in the 3 coordinates, for $i \neq j$.
Also, since each vertex $v=(v_1,v_2,v_3) \in S_2$ satisfies $v_1 + v_2 + v_3 = 2i + a_i \neq 0$ $(mod\ q)$, $S^1$ and $S^2$ are disjoint
sets, i.e.$\!$ if $u\in S_1$ then $(u,1)$ and $(v,2)$ are not adjacent in $H$.
\end{proof}

\medskip

Finally, we make some computational experiments in order to further our understanding of the PCN on Hamming graphs. By considering Corollary
\ref{COROLLARY2},
we focus on solving the MSS problem for $H_{q,m}^{[m-1]}$. We recall that state-of-the-art exact algorithms that solve the MSS problem report
an average size of instances less than 1000 vertices (see Rebennack, 2011).

We address this problem on Hamming graphs up to 10000 vertices, obtaining bounds in most of the cases.

For any $F \subset [m-1]$, the value $\alpha(H_{q,m}^F)$ can be obtained by solving the following $\{0,1\}$-linear programming
formulation of MSS problem:\\

\indent \indent \indent $max \sum_{k \in F} \sum_{v \in V(H_{q,m})} x_{v,k}$\\
\indent \indent \indent subject to\\
\indent \indent \indent \indent $\sum_{k \in F} x_{v,k} \leq 1 ~~~~~~~~~~~~\forall~v \in V(H_{q,m})$\\
\indent \indent \indent \indent $\sum_{v \in K} x_{v,k} \leq 1 ~~~~~~~~~~~~\forall~K \in \mathcal{K}_k$\\
\indent \indent \indent \indent $x_{v,k} \in \{0,1\} ~~~~~~~~~~~~~~~~\forall~v \in V(H_{q,m}), k \in F$\\

Here, stable sets of $H_{q,m}^F$ are described as the $\{0,1\}$-vectors $x_{v,k}$ and $\mathcal{K}_k$ is a particular polinomial
covering of edges by cliques of $H_{q,m}^k$. Such covering can be obtained as follows.
Define $N_G[v]$ as the closed the neighborhood of $v$ in $G$, i.e.$\!$ $N_G[v] = \{v\} \cup \{ u \in V(G) : (u,v) \in E(G)\}$.
If $k$ is even, consider $\mathcal{K}_k = \{ N_G[v] : v \in V \}$ where $G = H_{q,m}^{\frac{k}{2}}$.
The following straightforward result guarantees that this set is a suitable covering:
\begin{lemma}
Let $\mathcal{K} = \{ N_G[v] : v \in V \}$. For all $K \in \mathcal{K}$, $K$ is a clique of $G^2$. Moreover, for every $(u,v) \in E(G^2)$
there exists $K \in \mathcal{K}$ such that $u, v \in K$.
\end{lemma}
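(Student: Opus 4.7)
The plan is to unpack the definitions of $G^2$ and closed neighborhood, and verify the two assertions separately by a short case analysis based on the distance $d_G(u,v)$.

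First I would prove that every $K = N_G[v]$ is a clique in $G^2$. Pick any two distinct $u_1, u_2 \in N_G[v]$. Each of them is either $v$ itself or a neighbor of $v$ in $G$, so $d_G(u_i, v) \leq 1$ for $i=1,2$. By the triangle inequality for graph distance, $d_G(u_1, u_2) \leq d_G(u_1, v) + d_G(v, u_2) \leq 2$, which by definition of $G^2$ means $(u_1, u_2) \in E(G^2)$. Hence $K$ induces a clique in $G^2$.

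Next I would prove the covering property. Take any edge $(u, v) \in E(G^2)$, so $1 \leq d_G(u,v) \leq 2$. If $d_G(u,v) = 1$, then $v \in N_G[u]$ and trivially $u \in N_G[u]$, so $K = N_G[u] \in \mathcal{K}$ contains both endpoints. If $d_G(u,v) = 2$, then there exists a common neighbor $w \in V$ with $(u,w), (w,v) \in E(G)$; thus $u, v \in N_G[w]$, and $K = N_G[w] \in \mathcal{K}$ is the desired clique.

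There is no real obstacle here; the statement is essentially a restatement of the triangle inequality together with the fact that a distance-2 pair must share an intermediate vertex. The only thing to be mildly careful about is handling the two cases $d_G(u,v)=1$ and $d_G(u,v)=2$ explicitly, so that the covering claim is verified uniformly for every edge of $G^2$.
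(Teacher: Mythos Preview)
Your proof is correct. The paper does not actually supply a proof of this lemma; it simply calls it a ``straightforward result'' and moves on, so your argument fills in exactly the routine details the authors omitted.
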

If $k$ is odd and $k\geq 3$, consider $\mathcal{K}_k = \{ N_{H_{q,m}^{\frac{k-1}{2}}}[v] : v \in V \} \cup J$ where $J$ has maximal
cliques containing the uncovered edges, obtained in a greedy form.

In order to find the packing chromatic number of $H_{q,m}$, we set $F = [m-1]$ and solve the resulting formulation through
IBM ILOG CPLEX 12.6. A time limit of 2 hours is imposed.

Since, in the harder cases, CPLEX is unable to obtain a reasonable initial incumbent, we also implement an heuristic
that finds a maximal stable set of $H_{q,m}^{[m-1]}$:
\begin{enumerate}
\item Compute $S \leftarrow \mathcal{S}_{q,m}$ (defined in the proof of Theorem \ref{TEOREMITA}).
\item For all $k = 2, \ldots, m-1$ do:
\begin{enumerate}
\item Find a maximal stable set $S'$ of $H_{q,m}^k[V(H_{q,m}^k) \backslash S]$ with a greedy heuristic.
\item Try to improve the size of $S'$ by solving the MSS formulation on $H_{q,m}^k$ with $x_{v,k} = 0$ for $v \in S$.
A time limit of 10 minutes is imposed to the solver.
\item Update $S \leftarrow S \cup S'$.
\end{enumerate}
\end{enumerate}

We also take advantage of Corollary \ref{COROLLARY2} and the result of Bre\v{s}ar et al. (2007) for getting an initial upper bound
of $\alpha(H_{q,m}^{[m-1]})$ in terms of the best lower bound of $\chi_\rho(H_{q,m-1})$ obtained in a previous stage:
$$\alpha(H_{q,m}^{[m-1]}) \leq m-1 + q^m - \bigl( q \chi_\rho(H_{q,m-1}) - (q-1) (m-1) \bigr)$$

The following table summarizes the resulting bounds after the optimization (a mark ``$-$'' means the Hamming graph has more than 10000):

\begin{center} \small
\begin{tabular}{|@{\hspace{3pt}}c@{\hspace{3pt}}|@{\hspace{3pt}}c@{\hspace{3pt}}c@{\hspace{3pt}}|@{\hspace{3pt}}c@{\hspace{3pt}}c@{\hspace{3pt}}|@{\hspace{3pt}}c@{\hspace{3pt}}c@{\hspace{3pt}}|@{\hspace{3pt}}c@{\hspace{3pt}}c@{\hspace{3pt}}|@{\hspace{3pt}}c@{\hspace{3pt}}c@{\hspace{3pt}}|@{\hspace{3pt}}c@{\hspace{3pt}}c@{\hspace{3pt}}|@{\hspace{3pt}}c@{\hspace{3pt}}c@{\hspace{3pt}}|@{\hspace{3pt}}c@{\hspace{3pt}}c@{\hspace{3pt}}|}
\hline
 q= & \multicolumn{2}{c}{3} & \multicolumn{2}{c}{4} & \multicolumn{2}{c}{5} & \multicolumn{2}{c}{6} & \multicolumn{2}{c}{7} & \multicolumn{2}{c}{8} & \multicolumn{2}{c}{9} & \multicolumn{2}{c|}{10} \\
  & LB & UB & LB & UB & LB & UB & LB & UB & LB & UB & LB & UB & LB & UB & LB & UB \\
\hline
$m=4$ & 48 & 48 & 175 & 175 & 473 & 473 & 1041 & 1043 & 2005 & 2009 & 3515 & 3525 & 5745 & 5757 & 8893 & 8908 \\
$m=5$ & 136 & 144 & 688 & 704 & 2349 & 2402 & 6226 & 6315 & $-$ & $-$ & $-$ & $-$ & $-$ & $-$ & $-$ & $-$ \\
$m=6$ & 399 & 432 & 2737 & 2908 & $-$ & $-$ & $-$ & $-$ & $-$ & $-$ & $-$ & $-$ & $-$ & $-$ & $-$ & $-$ \\
$m=7$ & 1185 & 1347 & $-$ & $-$ & $-$ & $-$ & $-$ & $-$ & $-$ & $-$ & $-$ & $-$ & $-$ & $-$ & $-$ & $-$ \\
$m=8$ & 3541 & 4099 & $-$ & $-$ & $-$ & $-$ & $-$ & $-$ & $-$ & $-$ & $-$ & $-$ & $-$ & $-$ & $-$ & $-$ \\
\hline
\end{tabular}
\end{center}

Note that the bound given by Lemma \ref{BRESAR} holds as equality for $m=4$ and $q \in\{4,5\}$, and the gap between
both bounds is certainly close for $q = 6,\ldots,9$. This observation makes us wonder whether, as in the case $m=3$, the bound given by
Lemma \ref{BRESAR} holds as equality for the case $m=4$ and $q \geq 4$.
In addition, the optimal stable sets found for $m=4$ and $q \in\{4,5\}$ have the form
$\{ (v,1) : v \in S^1 \} \cup \{ (v,2) : v \in S^2 \} \cup \{ (v,3) : v \in S^3 \}$ where $S^1 = \mathcal{S}_{q,4}$,
$S^2 = \{ (i,j,a_{ij}) : i,j = 0,\ldots,q-1\}$ and $S^3 = \{ (i,i,i,c_i) : i = 0,\ldots,q-1 \}$ where $a_{ij}$ is a certain
permutation of values $\{ (i,j) : i,j = 0,\ldots,q-1 \}$ and $c_i$ a certain permutation of $\{ 0,\ldots,q-1 \}$, so our conjecture could
be addressed by trying to construct such permutations.\\

On the other hand, our heuristic turned out to be helpful for hypercubes ($q=2$).
It allows us to improve the best known upper bound of PCN for $m=11$ given by Torres (2013), from value $881$ to $857$.
We also get values for $m \in \{12,13\}$: $\chi_\rho(H_{2,12}) \leq 1707$ and $\chi_\rho(H_{2,13}) \leq 3641$.

\bigskip
\section{A polynomial reduction from the PCN problem to the MSS problem} \label{polymss}

Let $G$ be a graph with $n$ vertices and diameter $d$. Observe that by Lemma \ref{PROPPABLO} we can obtain the value of $\chi_\rho(G)$ by
applying an iterative procedure. First, we compute $\alpha(G^{[d-1]})$.
If $\alpha(G^{[d-1]})< n$ we already know that $\chi_\rho(G) = (d-1) + n - \alpha(G^{[d-1]})$.
Otherwise, $\alpha(G^{[d-1]})=n$ and we can compute $\alpha(G^{[k]})$ with decreasing values of $k$ from $d-1$ to the first value $\hat{k}$ for which $\alpha(G^{[\hat{k}]})<n$. In this case, $\chi_\rho(G) = \hat{k}+1$.

In this section, we give a direct formula for obtaining $\chi_\rho(G)$ in terms of the stability number of a certain graph. 
Given $F \subset [n]$, we define the graph $G_*^F$ such that $V(G_*^F) = V(G^F) \cup \{(*,k) : k \in F \}$
and $E(G_*^F) = E(G^F) \cup \bigcup_{k \in F} \{ ((*,k),(v,k)) : v \in V(G) \}$. Given $S\subset V(G^F_*)$, we denote $K(S)= \{k \in F: (*,k) \in S \}$.
 
Clearly, $G^F_*$ can be constructed in polynomial time from $G$ for any $F\subset [n]$. Then, the mentioned reduction to the MSS problem is given by the following theorem:

\begin{theorem} \label{TEOREMAGEN}
Let $G$ be a connected graph with diameter $d \geq 2$.
Then,
$$\chi_\rho(G) = (d-1) + n - \alpha(G_*^{[d-1]}).$$
\end{theorem}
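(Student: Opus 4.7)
The plan is to prove the two inequalities $\alpha(G_*^{[d-1]}) \geq (d-1)+n-r$ and $\alpha(G_*^{[d-1]}) \leq (d-1)+n-r$ separately, where $r = \chi_\rho(G)$.

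For the \emph{lower bound}, I would exhibit an explicit stable set of the required size, splitting into two cases according to Lemma \ref{PROPPABLO}. If $r \geq d$, then $\alpha(G^{[d-1]}) = (d-1)+n-r$ by that lemma, and any maximum stable set of $G^{[d-1]}$ is also a stable set of $G_*^{[d-1]}$ (containing no star vertices). If $r < d$, a packing $r$-coloring of $G$ embeds as a stable set $W$ of size $n$ in $G^{[r]} \subseteq G^{[d-1]}$, all of whose vertices lie at levels $\leq r$; I can then safely append the star vertices $(*,r+1),\ldots,(*,d-1)$, since each $(*,k)$ is adjacent only to vertices of the form $(v,k)$ with $k>r$, and star vertices are pairwise non-adjacent. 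This produces a stable set of size $n + (d-1-r)$.

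For the \emph{upper bound}, I would take an arbitrary stable set $S$ and translate it into a genuine packing coloring of $G$. Write $S = W \sqcup \{(*,k):k\in K\}$ with $t = |K|$, set $F' = [d-1]\setminus K$, and let $c$ be the number of distinct levels in $F'$ that $W$ actually occupies. Since $W$ is stable in the induced subgraph $G^{F'}$, it encodes a partial packing coloring of $G$ with $c$ colors, covering $|W|$ vertices and leaving $|U| = n - |W|$ uncovered. Extending by assigning each uncovered vertex its own fresh singleton color yields a packing coloring with exactly $c + |U|$ non-empty color classes. After relabeling all colors downward to $\{1,\ldots,c+|U|\}$ — which is legal because a vertex class at color $i$ remains valid at any color $j \leq i$ (the required pairwise distance only decreases) — one obtains a packing $(c+|U|)$-coloring, so $r \leq c + |U|$. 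Hence $|W| \leq n - r + c$ and, combined with $c + t \leq d-1$ (because $W$ only uses the $d-1-t$ colors of $F'$), this gives $|S| = |W| + t \leq (d-1) + n - r$.

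The step I expect to be the main obstacle is the correct bookkeeping in the extension argument: I need to convert the hybrid object $S$ (partial coloring $+$ declared unused colors) into an honest packing coloring whose count relates directly to $|S|$. The relabeling trick is what makes this work, and it also explains why the bound $c + t \leq d - 1$ appears naturally: the star vertices ``pay'' for colors that $W$ refuses to use, while the uncovered vertices ``pay'' for colors beyond $\chi_\rho(G)$. Note that, unlike in the lower bound, the upper bound does not require splitting on whether $r \geq d$; the extension argument handles both regimes uniformly.
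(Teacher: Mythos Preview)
Your argument is correct. Both inequalities go through as you describe; in particular, the relabeling step in the upper bound is sound because the $c$ occupied levels $i_1 < \cdots < i_c$ satisfy $i_j \geq j$, so each class can be shifted down to color $j$, and singletons can sit at any color.

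The route, however, differs from the paper's. The paper first proves a structural lemma (Lemma~\ref{LEMITA1}) about $G_*^F$ for general $F$: part~(i) normalizes any maximum stable set so that its star vertices occupy the \emph{top} levels of $F$ (and shows that when $\alpha(G^F)<n$ one can remove all stars), while part~(ii) derives the formula $\alpha(G_*^{[p]}) = \max_{t\le p}\{\alpha(G^{[t]}) + p - t\}$. Theorem~\ref{TEOREMAGEN} is then a two-line corollary, splitting on $\chi_\rho(G)\le d-1$ versus $\chi_\rho(G)\ge d$ and invoking the appropriate part of the lemma together with Lemma~\ref{PROPPABLO}. Your proof bypasses this lemma entirely: the lower bound is the same explicit construction the paper uses inside Lemma~\ref{LEMITA1}(ii), but your upper bound replaces the star-shifting normalization by a direct extraction of a packing coloring from an arbitrary stable set, with the relabeling trick doing the work. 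This buys you a uniform upper-bound argument with no case split on $r$ versus $d$, at the cost of not obtaining the more general identity $\alpha(G_*^{[p]}) = \max_{t\le p}\{\alpha(G^{[t]}) + p - t\}$ along the way.
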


The proof of the previous theorem is based on the following properties of graphs $G^F_*$:

\begin{lemma} \label{LEMITA1}
(i)  Let $F \subset [n]$ and $S$ be a maximum stable set of $G_*^F$.
Then, there exists a maximum stable set $S^*$ of $G_*^F$ such that $K(S^*)$ is the set of $|K(S)|$ largest numbers of $F$. Moreover,
if $\alpha(G^F)< n$, there exists a maximum stable set $S$ of $G_*^F$ such that $K(S)=\varnothing$ and then, $\alpha(G_*^F)=\alpha(G^F)$. \\
(ii) Let $p\in [n]$. Then, $\alpha(G_*^{[p]})= max \{\alpha(G^{[t]})+p-t: t\leq p\}$. Moreover, if $\alpha(G^{[t]})=n$ for some $t\leq p$, $\chi_{\rho}(G)= n+p - \alpha(G_*^{[p]})$.
\end{lemma}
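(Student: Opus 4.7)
My plan is to prove Lemma \ref{LEMITA1} via two exchange arguments: one that rearranges $K(S)$ to the top of $F$ without changing the size, and a second that, under the hypothesis $\alpha(G^F)<n$, eliminates the remaining starred vertices. For part (i), first claim, suppose $S$ is a maximum stable set with $j<k$ in $F$ such that $j\in K(S)$ and $k\notin K(S)$. Set $S_k=\{v:(v,k)\in S\}$, which is independent in $G^k$ and therefore in $G^j$ since $E(G^j)\subseteq E(G^k)$. Define
\[ S' = (S \setminus (\{(*,j)\} \cup \{(v,k) : v \in S_k\})) \cup (\{(*,k)\} \cup \{(v,j) : v \in S_k\}). \]
A direct check shows $S'$ is stable (level $j$ now hosts the independent set $S_k$; level $k$ is empty of ordinary vertices so $(*,k)$ fits; and each vertex $v\in S_k$ still appears in exactly one level of $S'$), with $|S'|=|S|$ and $K(S')=(K(S)\setminus\{j\})\cup\{k\}$. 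Iterating finitely many such swaps produces the required $S^*$.

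For the second claim of part (i), assume $\alpha(G^F)<n$ and let $S$ be a maximum stable set with $K(S)$ already at the top of $F$. If $K(S)\neq\varnothing$ and $k_r$ is its largest element, set $T=S\setminus\{(*,k):k\in K(S)\}$. Since $T$ is a stable set of $G^F$ and each vertex of $G$ appears in at most one level of $T$, we have $|V(T)|=|T|\leq\alpha(G^F)<n$, so there exists $v_0\in V(G)$ with $(v_0,\ell)\notin S$ for every $\ell$. Because $(*,k_r)\in S$ leaves level $k_r$ free of ordinary vertices, the set $S'=(S\setminus\{(*,k_r)\})\cup\{(v_0,k_r)\}$ is stable of the same size with $|K(S')|=|K(S)|-1$. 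Iteration (at each step the non-$*$ portion is still a stable set of $G^F$, so its size stays below $n$) removes all $*$ vertices, producing a maximum stable set in $V(G^F)$ and hence $\alpha(G_*^F)=\alpha(G^F)$.

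For part (ii), specialize $F=[p]$. By part (i), any maximum stable set $S$ of $G_*^{[p]}$ can be taken with $K(S)=\{t+1,\dots,p\}$ for some $t\in\{0,\dots,p\}$, whose non-$*$ portion lies in levels $1,\dots,t$, is a stable set of $G^{[t]}$, and yields $|S|\leq\alpha(G^{[t]})+(p-t)$. Conversely, for each such $t$, pasting any maximum stable set of $G^{[t]}$ together with $(*,t+1),\dots,(*,p)$ realises this value, which gives $\alpha(G_*^{[p]})=\max\{\alpha(G^{[t]})+p-t:t\leq p\}$. For the moreover part, let $k^*=\chi_\rho(G)$, so by Lemma \ref{PROPPABLO} $\alpha(G^{[k^*]})=n$ and $\alpha(G^{[t]})<n$ for $t<k^*$; the hypothesis forces $k^*\leq p$. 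At $t=k^*$ the bracketed expression equals $n+p-k^*$; for $t>k^*$ it is bounded by $n+p-t\leq n+p-k^*$; for $t<k^*$, extending a maximum stable set of $G^{[t]}$ to a packing coloring by assigning fresh singleton colors to the remaining $n-\alpha(G^{[t]})$ vertices gives $\chi_\rho(G)\leq t+n-\alpha(G^{[t]})$, i.e.\ $\alpha(G^{[t]})\leq n-(k^*-t)$, again bounding the expression by $n+p-k^*$. Hence $\alpha(G_*^{[p]})=n+p-k^*$, equivalent to $\chi_\rho(G)=n+p-\alpha(G_*^{[p]})$. I expect the main obstacle to be the careful bookkeeping of the two exchange arguments in part (i), especially verifying that stability is preserved after each swap and that the inductive hypothesis $|V(T)|<n$ persists through the iteration.
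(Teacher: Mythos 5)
Your proof is correct and follows essentially the same route as the paper: push the starred vertices to the largest levels using the fact that $E(G^j)\subseteq E(G^k)$ for $j<k$, eliminate them entirely when $\alpha(G^F)<n$ by substituting an unrepresented vertex $v_0$ for a star, and then read off $\alpha(G_*^{[p]})$ as the stated maximum. The only differences are cosmetic: you establish the first claim of (i) by iterated pairwise swaps where the paper rebuilds $S^*$ in one step from a maximum stable set of $G^{F\setminus F'}$ and compares cardinalities, and in the ``moreover'' of (ii) you supply an explicit justification (via singleton color classes) of the inequality $\alpha(G^{[t]})\leq n-(\chi_\rho(G)-t)$ for $t<\chi_\rho(G)$, a step the paper leaves implicit when asserting $\gamma=n+p-t'$.
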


\begin{proof}
Part (i): If $k \in K(S)$, $S \cap V(G^k) = \varnothing$ and therefore $|S| = \alpha(G^{F \backslash K(S)}) + |K(S)|$.
Let $F'$ be the set of $|K(S)|$ largest numbers of $F$. 
Let $S'$ be a maximum stable set of $G^{F \backslash F'}$ and $S^* = S' \cup \{ (*,k) : k \in F'\}$. Clearly, $S^*$ is a stable set of $G_*^F$ such that $K(S^*)=F'$, with cardinality $\alpha(G^{F \backslash F'}) + |K(S)|$. Since 
$G^{F \backslash F'}$ is isomorphic to an edge subgraph of $G^{F \backslash K(S)}$, $\alpha(G^{F \backslash F'}) \geq \alpha(G^{F \backslash K(S)})$. Then, $|S^*|\geq |S|$, implying that $S^*$ is a maximum stable set of $G^F_*$.

Now, consider the case $\alpha(G^F) < n$. Let $S$ be a maximum stable set of $G_*^{F}$ and assume that there exists $k\in K(S)$.
Then, $S \cap V(G^k) = \varnothing$. Since $S\cap V(G^F)$ is a stable set of $G^{F}$ and $\alpha(G^{F})< n$, there exists $v\in V(G)$ such that $(v,j)\notin S$ for all $j\in F$. Then, $\tilde S= (S\setminus \{(*,k)\}) \cup \{(v,k)\}$ is a maximum stable set of $G_*^F$ such that $|K(\tilde S)|=|K(S)|-1$. The thesis follows by applying recursively this argument.

Part (ii): Let $\gamma= max \{\alpha(G^{[t]})+p-t: t\leq p\}$. Note that, for any $t\leq p-1$ and any stable set $S$ of $G^{[t]}$,
$S \cup \{(*,k): k = t+1, \ldots, p\}$ is a stable set of $G_*^{[p]}$ of size $|S|+ (p-t)$. Then, $\alpha(G_*^{[p]})\geq \gamma$.  On the other hand, from Part (i), there exists a maximum stable set $S^*$ of $G_*^{[p]}$ and some $t\leq p$ such that $S'=S^*\cap V(G^{[t]})$ is stable set of $G^{[t]}$ with cardinality $|S^*| - (p-t)$. Then, $\alpha (G^{[p]}_*)=|S^*| = |S'|+ p-t \leq \alpha(G^{[t]})+p-t\leq \gamma$.

Now, consider the case $\alpha(G^{[t]})=n$ for some $t\leq p$. Let $t' = \chi_{\rho}(G) = min \{t: \alpha(G^{[t]})=n \}$ by
Lemma \ref{PROPPABLO}. Clearly,  $t' \leq p $ and $\gamma= n + p - t'$. Therefore, $t' = n + p - \gamma = n + p - \alpha(G_*^{[p]})$.
\end{proof}

\begin{proof} ({\em of Theorem \ref{TEOREMAGEN}}).
Let $t = \chi_\rho(G)$. Then, $\alpha(G^{[t]})=n$.
If $t \leq d-1$, the result follows from Lemma \ref{LEMITA1} (ii). 
If $t \geq d$, by Lemma \ref{PROPPABLO}, $\alpha(G^{[d-1]}) < n$ and
$t = (d-1) + n - \alpha(G^{[d-1]})$, and the result follows from Lemma \ref{LEMITA1} (i), since $\alpha(G^{[d-1]}) = \alpha(G_*^{[d-1]})$.
\end{proof}

\medskip

Observe that, if we know an upper bound $t$ of the $\chi_\rho(G)$ such that $t < d-1$, since $\alpha(G_*^{[t]})=n$, Lemma \ref{LEMITA1} (ii) allows us to
compute $\chi_\rho(G)$ in terms of the stability number of $G_*^{[t]}$, which is smaller than $G_*^{[d-1]}$.

\medskip

\section*{Acknowledgments}

This work is supported by grants PID CONICET 11220120100277 and PICT-2013-0586.

\bigskip
\noindent{\bf References}

\noindent \textbf {Argiroffo G., Nasini G. and Torres P.} (2012),
The Packing Coloring Problem for $(q,q-4)$ Graphs,
\textit{LNCS} 7422, 309--319.

\noindent \textbf{Bre\v{s}ar B., Klav\v{z}ar S. and Rall D. F.} (2007),
On the packing chromatic numbers of cartesian products, hexagonal lattice, and trees,
\textit{Discrete Appl. Math.} 155, 2303--2311.

\noindent \textbf{Fiala J. and Golovach P. A.} (2010),
Complexity of the packing coloring problem of trees,
\textit{Discrete Appl. Math.} 158, 771--778.

\noindent \textbf{Goddard W., Hedetniemi S. M., Hedetniemi S. T., Harris J. and Rall D. F.} (2008),
Broadcast Chromatic Numbers of Graphs,
\textit{Ars Combinatoria} 86, 33--49.

\noindent \textbf{Huffman W. C. and Pless V.},
\textit{Fundamentals of Error-Correcting Codes},
Cambridge University Press, 2003.

\noindent \textbf{Klotz W. and Sharifiyazdi E.} (2008),
On the Distance Chromatic Number of Hamming Graphs,
\textit{Advances and Applications in Discrete Mathematics} 2, 103--115.

\noindent \textbf{Rebennack S., Oswald M., Theis D. O., Seitz H., Reinelt G. and Pardalos P. M.} (2011),
A Branch and Cut solver for the maximum stable set problem,
\textit{J. Comb. Optim.} 21, 434--457.

\noindent \textbf{El Rouayheb S. Y., Georghiades C. N., Soljanin E., Sprintson, A.} (2007),
Bounds on Codes Based on Graph Theory,
\textit{Proc. IEEE International Symposium on Information Theory}, 1876--1879.

\noindent \textbf{Torres P. and Valencia-Pabon M.} (2013),
On the packing chromatic number of hypercubes,
\textit{ENDM} 44, 263--268.

\end{document}